\theoremstyle{definition}
\newtheorem{theo}{Theorem}[section]
\newtheorem{prop}{Proposition}[section]
\newtheorem{lemm}{Lemma}[section]
\newtheorem{ex}{Example}[section]
\newtheorem{alg}{Algorithm}[section]
\begin{document}


\title{The ultradiscrete Toda lattice and the Smith normal form of bidiagonal matrices} 



\author{Katsuki Kobayashi}
  \email{kobayashi.katsuki.74a@st.kyoto-u.ac.jp}
\author{Satoshi Tsujimoto}%
\affiliation{ 
Department of Applied Mathematics and Physics, Graduate School of Informatics, Kyoto University, Kyoto, 606-8501, Japan
}%


\date{\today}

\begin{abstract}
  The discrete Toda lattice preserves the eigenvalues of tridiagonal matrices, and convergence of dependent variables to the eigenvalues can be proved under appropriate conditions.
  We show that the ultradiscrete Toda lattice preserves invariant factors of a certain bidiagonal matrix over a principal ideal domain,
  and prove convergence of dependent variables to invariant factors using properties of box and ball system.
   Using this fact, we present a new method for computing the Smith normal form of a given matrix.
\end{abstract}

\pacs{02.30.Ik}

\maketitle 

\section{Introduction}
Discrete integrable systems are closely related to the matrix eigenvalues or singular values.
For example, it is known that the discrete Lotka-Volterra system computes the singular values of bidiagonal matrices \cite{TNI}.
Iwasaki and Nakamura thus proposed the {\it mdLVs algorithm} to compute singular values of an upper bidiagonal matrix \cite{IN}.
From the viewpoint of the integrable systems, various numerical algorithms have been developed \cite{FIYIN, MN, PGR}.

The another application of discrete integrable systems is that they are related to cellular automata via a procedure called {\it ultradiscretization},
which is a limiting process to construct piecewise linear systems from discrete systems.
One of the remarkable features of ultradiscretization is that the conserved quantities and exact solutions are preserved by the limiting procedure.
The most famous example is the Takahashi-Satsuma's box and ball system (BBS) \cite{TS}, which is known to be obtained by an ultradiscretization of the KdV equation \cite{TTMS}.

As with discrete systems, attempts have been made to relate ultradiscrete systems with several matrix characteristics.
For example, Heidergott et al. \cite{HOW} introduced an ultradiscrete analogue of eigenvalues and eigenvectors, which are related to circuits weights in
a weighted digraph.
However, the relation between ultradiscrete systems and matrix characteristics over an ordinary ring (such as $\mathbb{Z}$ or $\mathbb{C}[x]$) that are not min-plus algebra is not known.
In this paper, we discuss the relationship between the ultradiscrete systems and the {\it invariant factors} of matrices over a principal ideal domain.

Any matrix over a principal ideal domain $R$ can be transformed into a particular form of diagonal matrix by unimodular transformations. That is,
for any matrix $A \in R^{m \times n}$, there exist invertible matrices $P \in R^{m \times m}$ and $Q \in R^{n \times n}$ such that
the matrix $S = PAQ$ vanishes off the main diagonal, $(e_1, e_2, ..., e_r, 0, ..., 0)$,
where $e_i$ divides $e_{i+1}$ for $1 \leq i \leq r-1$. The matrix $S$ is called the \textit{Smith normal form} of $A$ and
the quantities $e_1, e_2, ..., e_r$ are the \textit{invariant factors} of $A$.
The Smith normal form has application in many areas, including integer programming \cite{GN}, combinatorics \cite{St} and computations of homology groups \cite{DHV}.
There are many algorithms for computing the Smith normal form \cite{K, S}.
The main purpose of this paper is to show that the ultradiscrete Toda lattice can compute invariant factors of bidiagonal matrices with elements of a principal ideal domain as entries.
The key observation is that the gcd operation in principal ideal domain is equivalent to applying min operation for each irreducible factors,
which allows us to run multiple ultradiscrete Toda lattice simultaneously.
Using the main result of this paper, we present a new method for computing the Smith normal form of a given matrix.

The paper is organized as follows.
In Section 2, we review the relationship between the ultradiscrete Toda lattice and the BBS.
We also give several basic properties of the BBS that are necessary to prove the main theorem.
In Section 3,
we give the main theorem of this paper, which states that
the dependent variables of the ultradiscrete Toda lattice converge to the exponents of invariant factors of a certain bidiagonal matrix.
We also present an algorithm to compute invariant factors of bidiagonal matrices based on the main theorem.
Section 4 presents the concluding remarks.
\section{Ultradiscrete Toda lattice and the box and ball system}
Let us start with the discrete Toda lattice, which is given by
\begin{align}
  \left\{
  \begin{array}{l}
    q_n^{(t+1)} = q_n^{(t)}+e_n^{(t)}-e_{n-1}^{(t+1)} \\
    e_n^{(t+1)} = q_{n+1}^{(t)} e_n^{(t)}/q_n^{(t+1)} \\
    e_{-1}^{(t)} = e_{N-1}^{(t)} = 0
  \end{array}. \label{dtoda}
  \right.
\end{align}
We rewrite \eqref{dtoda} as
\begin{align}
  \left\{
  \begin{array}{l}
  q_n^{(t+1)} = e_{n}^{(t)} + \cfrac{\prod_{j=0}^{n} q_j^{(t)}}{\prod_{j=0}^{n-1}q_j^{(t+1)}} \\
  e_n^{(t+1)} = e_n^{(t)}q_{n+1}^{(t)}/q_{n}^{(t+1)} \\
  e_{-1}^{(t)} = e_{N-1}^{(t)} = 0
\end{array}, \label{eq2}
\right.
\end{align}
by which we can compute the discrete Toda lattice without subtractions.
Let us recall the ultradiscretization.
Suppose we have positive variables $a_1, a_2, a_3, a_4$ and $a_5$ and the relation
\begin{align}
  a_1 = \cfrac{a_2(a_3+a_4)}{a_5}. \label{u1}
\end{align}
We substitute $a_i = e^{-A_i/\epsilon}$ into the relation \eqref{u1},
and take the limit as $\epsilon \to +0$. Then we obtain
\begin{align}
  A_1 = A_2 + \min(A_3, A_4) - A_5. \label{u2}
\end{align}
We see that the transition from equation \eqref{u1} to \eqref{u2} is the same as replacing
$(\times, /, +)$ with $(+, -, \min)$, respectively.
We ultradiscretize \eqref{eq2} to yield
\begin{align}
  \left\{
  \begin{array}{l}
  Q_n^{(t+1)} = \min\left(E_{n}^{(t)},~\sum_{j=0}^{n} Q_j^{(t)}-\sum_{j=0}^{n-1}Q_j^{(t+1)}\right) \\
  E_{n}^{(t+1)} = E_{n}^{(t)} + Q_{n+1}^{(t)}-Q_{n}^{(t+1)}  \\
  E_{-1}^{(t)} = E_{N-1}^{(t)} = +\infty
  \end{array}. \label{udtoda}
  \right.
\end{align}
System \eqref{udtoda} is known as the {\it ultradiscrete Toda lattice} (ud-Toda lattice), which can be considered to be a time evolution of the \textit{box and ball system} (BBS).
The following property is easy to prove, but important in our study.
\begin{prop}
  The ud-Toda lattice \eqref{udtoda} defines the map
\begin{align}
  \begin{array}{ccc}
  \hspace{15pt}(\mathbf{R_{\geq 0}})^{2N-1} & ~\longrightarrow & \hspace{15pt}(\mathbf{R_{\geq 0}})^{2N-1} \\
  \rotatebox{90}{$\in$} & & \rotatebox{90}{$\in$} \\
    (Q_0^{(t)}, ..., Q_{N-1}^{(t)}, E_0^{(t)}, ..., E_{N-2}^{(t)})  &~ \longmapsto & (Q_0^{(t+1)}, ..., Q_{N-1}^{(t+1)}, E_0^{(t+1)}, ..., E_{N-2}^{(t+1)})
  \end{array}
\end{align}
where $\mathbf{R}_{\geq 0}$ denotes the set of nonnegative real numbers.
\end{prop}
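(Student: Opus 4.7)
The plan is to show by induction on $n$ that both $Q_n^{(t+1)}$ and $E_n^{(t+1)}$ are nonnegative whenever the $Q_n^{(t)}$'s and $E_n^{(t)}$'s are. The central object to track is the partial sum
\[
  S_n := \sum_{j=0}^{n} Q_j^{(t)} - \sum_{j=0}^{n-1} Q_j^{(t+1)},
\]
so that the first equation of \eqref{udtoda} reads simply $Q_n^{(t+1)} = \min(E_n^{(t)}, S_n)$. Once I know $S_n \geq 0$, nonnegativity of $Q_n^{(t+1)}$ follows at once, since $E_n^{(t)} \geq 0$ by hypothesis (and $E_{-1}^{(t)} = E_{N-1}^{(t)} = +\infty$ at the boundary causes no issue).

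The key step is therefore an auxiliary induction on $n$ showing $S_n \geq 0$. The base case is $S_0 = Q_0^{(t)} \geq 0$. For the inductive step, observe that
\[
  S_n = S_{n-1} + Q_n^{(t)} - Q_{n-1}^{(t+1)},
\]
and by the defining formula $Q_{n-1}^{(t+1)} = \min(E_{n-1}^{(t)}, S_{n-1}) \leq S_{n-1}$, so $S_n \geq Q_n^{(t)} \geq 0$. This immediately gives $Q_n^{(t+1)} \geq 0$ for all $n$.

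For the $E$-variables, the second equation of \eqref{udtoda} gives $E_n^{(t+1)} = E_n^{(t)} - Q_n^{(t+1)} + Q_{n+1}^{(t)}$. The bound $Q_n^{(t+1)} \leq E_n^{(t)}$, which is automatic from the $\min$ expression, makes the first two terms nonnegative, and $Q_{n+1}^{(t)} \geq 0$ by hypothesis; hence $E_n^{(t+1)} \geq 0$. I do not expect any serious obstacle: the only subtlety is recognising that the partial sum $S_n$ is the natural quantity to induct on, after which the argument reduces to reading off the $\min$ and a one-line telescoping identity. The infinite boundary values $E_{-1}^{(t)}, E_{N-1}^{(t)} = +\infty$ lie outside the tuple under consideration and play no role in the inequalities.
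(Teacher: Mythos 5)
Your proof is correct and follows essentially the same route as the paper's: both arguments induct along $n$, use the telescoping identity for the partial sum $S_n$ together with $\min(a,b)\leq b$ to get $S_n\geq 0$ and hence $Q_n^{(t+1)}\geq 0$, and use $\min(a,b)\leq a$ to get $E_n^{(t+1)}\geq 0$. The only difference is organizational (you isolate the induction on $S_n$ first, while the paper runs a single simultaneous induction), which does not change the substance.
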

\begin{proof}
  Clearly, $Q_0^{(t+1)} = \min(E_0^{(t)}, Q_0^{(t)}) \geq 0$. Suppose we have proved that $Q_0^{(t+1)}, Q_1^{(t+1)}, ..., Q_{n}^{(t+1)} \geq 0$ and $E_0^{(t+1)}, Q_1^{(t+1)}, ..., E_{n-1}^{(t+1)} \geq 0$ for some $n \geq 1$.
  We will show that $E_n^{(t+1)}, Q_{n+1}^{(t+1)} \geq 0$.
  From \eqref{udtoda}, it follows that
  \begin{align}
    E_n^{(t+1)} &= E_n^{(t)} + Q_{n+1}^{(t)} - Q_n^{(t+1)} \\
                &= E_n^{(t)} - \min\left(E_{n}^{(t)},~\sum_{j=0}^{n} Q_j^{(t)}-\sum_{j=0}^{n-1}Q_j^{(t+1)}\right) + Q_{n+1}^{(t)} \geq 0.
  \end{align}
  Similarly, $E_n^{(t)} \geq 0$ and
  \begin{align}
    \sum_{j=0}^{n+1} Q_j^{(t)} - &\sum_{j=0}^{n}Q_j^{(t+1)} = Q_{n+1}^{(t)} + \sum_{j=0}^{n} Q_j^{(t)} - \sum_{j=0}^{n-1} Q_{j}^{(t+1)} - Q_n^{(t+1)} \\
                                                           &= Q_{n+1}^{(t)} + \sum_{j=0}^{n} Q_j^{(t)} - \sum_{j=0}^{n-1} Q_{j}^{(t+1)} -  \min\left(E_{n}^{(t)},~\sum_{j=0}^{n} Q_j^{(t)}-\sum_{j=0}^{n-1}Q_j^{(t+1)}\right) \geq 0.
  \end{align}
  Hence, $Q_{n+1}^{(t+1)} \geq 0$.
\end{proof}
In what follows, we assume $Q_n^{(0)}, E_n^{(0)} \in \mathbb{Z}_{\geq 0}$ for all $n$. In this case,
we have $Q_n^{(t)}, E_n^{(t)} \in \mathbb{Z}_{\geq 0}$ for all $t$ by Proposition 2.1.
Before explaining the relation between the ud-Toda lattice and the BBS, let us review BBS as a dynamical system on $01$-sequences.
 Here, $u = (u_n)_{n \in \mathbf{Z}} \in \{0, 1\}^{\mathbf{Z}}$ denote an infinite $01$-sequence.
We also suppose $u_n = 0$ for all but finitely many $n \in \mathbf{Z}$. We call `$0$' and `$1$' an empty box and a ball, respectively.
We define the operator $T \colon \{0, 1\}^{\mathbf{Z}} \to \{0, 1\}^{\mathbf{Z}}$ by
\begin{align}
  (Tu)_n = \min \left\{ 1 - u_n, \sum_{m= -\infty}^{n-1} (u_m - (Tu)_m) \right\}. \label{udkdv}
\end{align}
Equation \eqref{udkdv} is called {\it ultradiscrete KdV equation}.
We regard the successive application of the operator $T$ as the time evolution of the BBS.
An example of time evolutions of the BBS is as follows:
  \begin{align}
    u:~\cdots011110001110010000000000000000\cdots \\
    T^1u:~\cdots000001110001101110000000000000\cdots \\
    T^2u:~\cdots000000001110010001111000000000\cdots \\
    T^3u:~\cdots000000000001101100000111100000\cdots \\
    T^4u:~\cdots000000000000010011100000011110\cdots
  \end{align}
If there are sufficiently many empty boxes before and after a block of consecutive balls,
then the block of balls propagates to the right at the same speed as the block length.
We call such a block of consecutive balls a \textit{soliton}.
As seen in the example above 
the larger soliton overtakes the smaller soliton, and the soliton amplitudes do not changed after collisions.
The length of solitons remains unchanged over time (see proof in \cite{TNS}). The conservation of the length of solitons plays a key role in our study.

Let us consider the BBS with $N$ solitons. System \eqref{udtoda} is regarded as the time evolution of the BBS by the identification \cite{NTT},
\begin{itemize}
\item $Q_n^{(t)}\colon$the length of the $(n+1)$-st block of consecutive balls at time $t$ and
\item $E_n^{(t)}\colon$the number of empty boxes between the $(n+1)$-st and the $(n+2)$-nd blocks of consecutive balls at time $t$.
\end{itemize}
For example, in the example above,
we set initial values as follows:
\begin{align}
  Q_0^{(0)} = 4, \quad Q_1^{(0)} = 3, \quad Q_2^{(0)} = 1, \quad E_0^{(0)} = 3, \quad E_1^{(0)} = 2.
\end{align}
The conserved quantities of the BBS can be expressed in terms of the dependent variables of the ud-Toda lattice as follows.
First, we define
\begin{align}
  &W_{2i+1}^{(t)} = Q_i^{(t)}, \quad i = 0, ..., N-1, \\
  &W_{2i}^{(t)} = E_i^{(t)},\quad i = 0, ..., N-2,
\end{align}
and
\begin{align}
  uC_1 &= \min_{1 \leq j_1 \leq 2N-1} W_{j_1}^{(t)}, \\
  uC_2 &= \min_{1 \leq j_1 < j_2 - 1 \leq 2N -1} (W_{j_1}^{(t)} + W_{j_2}^{(t)}), \\
  &\vdots \\
  uC_{l-1} &= \min_{1 \leq j_1 < j_2 - 1 < \cdots < j_l - l + 1  \leq 2N -1} (W_{j_1}^{(t)} + W_{j_2}^{(t)} + \cdots + W_{j_l}^{(t)}), \\
  &\vdots \\
  uC_N &= \min_{ 1 \leq j_1 < j_2 - 1 < \cdots < j_N - N + 1  \leq 2N -1} (W_{j_1}^{(t)} + W_{j_2}^{(t)} + \cdots + W_{j_N}^{(t)}).
\end{align}
Then the following proposition holds.
\begin{prop}[Ref. \onlinecite{TNS}]
  The $N$ independent conserved quantities for the ud-Toda lattice \eqref{udtoda} are given by $uC_1, uC_2, ..., uC_N$.
\end{prop}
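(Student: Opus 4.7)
My plan is to identify $uC_l(t)$ with the sum of the $l$ smallest soliton lengths in the BBS configuration obtained from $(Q^{(t)}, E^{(t)})$, and then conclude via the already-noted invariance of soliton lengths under the BBS dynamics. The main claim I would establish is the combinatorial identity
\begin{align}
uC_l(t) = L_1 + L_2 + \cdots + L_l, \qquad 1 \leq l \leq N,
\end{align}
where $L_1 \leq L_2 \leq \cdots \leq L_N$ denote the $N$ soliton lengths in nondecreasing order. This quantity is manifestly invariant in $t$, so the proposition reduces to proving the identity at every $t$.

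For the upper bound $uC_l \leq L_1 + \cdots + L_l$, I would evaluate the right-hand side at a sufficiently late time $t^*$ at which larger solitons have overtaken smaller ones and all $N$ solitons are well separated. At such $t^*$ the multiset $\{Q_0^{(t^*)}, \ldots, Q_{N-1}^{(t^*)}\}$ is exactly $\{L_1, \ldots, L_N\}$, while the empty-gap values $E_n^{(t^*)}$ can be made arbitrarily large. Choosing the $l$ odd positions of $W$ corresponding to the $l$ smallest $Q$-values at time $t^*$ produces an admissible "gap-at-least-$2$" selection whose value is $L_1+\cdots+L_l$. The matching lower bound $uC_l \geq L_1+\cdots+L_l$ at an arbitrary time $t$ is the combinatorial heart of the proof: every admissible selection $j_1 < j_2-1 < \cdots < j_l-l+1$ must sum to at least the total length of the $l$ shortest solitons. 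I would prove this via the standard $10$-elimination procedure for the BBS, in which repeated removal of adjacent $10$-pairs peels off soliton content, translating the non-adjacency constraint on the indices $j_k$ into each chosen position accounting for the full length of a distinct soliton.

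This lower-bound step is the main obstacle: at an interacting time the solitons are not directly visible in the block data $(Q_n^{(t)}, E_n^{(t)})$, so one must correlate the non-adjacent index selection with a careful accounting of ball/empty-box pairings through the $10$-elimination. Once the identity is established, the proposition follows immediately: conservation of $uC_l$ is immediate from invariance of the multiset $\{L_1,\ldots,L_N\}$ under the BBS evolution cited before Proposition~2.2, and functional independence of $uC_1, \ldots, uC_N$ follows because the successive differences $uC_l - uC_{l-1} = L_l$ recover the sorted soliton lengths, which are $N$ functionally independent quantities on the phase space.
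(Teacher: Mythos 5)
The paper does not actually prove Proposition~2.2 --- it imports it from the reference it cites --- so your proposal has to stand on its own, and as outlined it has a genuine logical gap. Your plan establishes (i) the lower bound $uC_l(t) \geq L_1 + \cdots + L_l$ for every $t$, and (ii) the equality $uC_l(t^*) = L_1 + \cdots + L_l$ at a single late, well-separated time $t^*$. These two facts together do not give $uC_l(t) = L_1 + \cdots + L_l$ at an arbitrary $t$: nothing you have said rules out $uC_l(0) > L_1 + \cdots + L_l$ for an interacting initial configuration, and without equality at every $t$ you have not proved conservation. To close the argument you need the matching upper bound at \emph{every} time, i.e.\ you must exhibit, at an arbitrary (possibly interacting) time, an admissible selection $j_1 < j_2 - 1 < \cdots < j_l - l + 1$ whose value equals the sum of the $l$ shortest soliton lengths; alternatively, you could prove one-step invariance $uC_l^{(t+1)} = uC_l^{(t)}$ directly from \eqref{udtoda} by a min-plus manipulation (the ultradiscrete analogue of the invariance of the characteristic-polynomial coefficients of the Lax matrix, which is essentially the route of the cited source). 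Note also that you cannot simply propagate the late-time value backward by appealing to ``conservation of soliton lengths,'' since the identification of $uC_l$ with soliton content at general times is precisely what is at stake.

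Separately, the step you yourself label the combinatorial heart --- that every admissible selection dominates the total length of the $l$ shortest solitons, via correlating the non-adjacency constraint with the $10$-elimination --- is announced as an obstacle rather than carried out, and it is genuinely nontrivial at interacting times, where the blocks $Q_n^{(t)}$ do not coincide with individual solitons. The remaining ingredients are sound: the evaluation at $t^*$ using the sorting property and the growth of the gaps $E_n^{(t)}$ correctly yields $L_1 + \cdots + L_l$ there, and the independence argument via the differences $uC_l - uC_{l-1}$, which recover the sorted lengths and hence the coordinates $Q_0^{(t)}, \ldots, Q_{N-1}^{(t)}$ on the sorted region, is acceptable.
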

The dependent variables $Q_0^{(t)}, \cdots Q_{N-1}^{(t)}, E_0^{(t)}, ..., E_{N-2}^{(t)}$ of the ud-Toda lattice satisfy the following lemma.
\begin{lemm}[Ref. \onlinecite{TNS}]
  There exists a positive integer $T$ such that for all $t > T$,
  \begin{align}
    Q_0^{(t)} \leq Q_1^{(t)} \leq \cdots \leq Q_{N-1}^{(t)}.
  \end{align}
\end{lemm}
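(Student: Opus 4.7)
The strategy is to translate the claim into the language of the box and ball system, where it becomes the statement that for large enough time the solitons are arranged from left to right in increasing order of their lengths. Under the identification recalled just above the statement, $Q_n^{(t)}$ equals the length of the $(n+1)$-st block of consecutive balls at time $t$; since the paper has already noted that the lengths of solitons are conserved by the BBS dynamics, it suffices to prove that after a finite number of time steps the $N$ ball blocks separate into $N$ free solitons sorted by length.

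The two dynamical facts I would use are: (i) a soliton sufficiently isolated from its neighbors moves rigidly to the right at speed equal to its length; and (ii) when two solitons interact, they emerge with their spatial order reversed and with lengths unchanged, the larger one ending up on the right. Using (i) and (ii) I would argue inductively on soliton length. The largest soliton is strictly faster than every other, so after a bounded number of steps it must pass every soliton initially to its right, after which it never interacts again. Applying the same argument to the remaining $N-1$ solitons, one concludes that after finitely many collisions the entire system has reached the sorted configuration. From that moment on any two adjacent solitons have the left one moving no faster than the right one, so the left-to-right ordering by length persists for all subsequent times; choosing $T$ to exceed the time of the last collision then gives $Q_0^{(t)} \leq Q_1^{(t)} \leq \cdots \leq Q_{N-1}^{(t)}$ for all $t>T$.

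The main obstacle is making the two-soliton interaction rule precise and verifying that every collision strictly reduces the number of inversions (pairs of solitons out of sorted order). A clean route is to identify each soliton with a contiguous subset of ones in the $01$-sequence, track its leftmost ball as its position, and use the conserved quantities $uC_1,\ldots,uC_N$ of Proposition 2.2 together with the propagation law of \eqref{udkdv} to bound these positions; this is essentially the content of the BBS scattering analysis in \cite{TNS}, which I would invoke rather than reprove. A minor subtlety is posed by solitons of equal length, but two such solitons never overtake each other and may therefore be regarded as already in sorted order for the purpose of the argument.
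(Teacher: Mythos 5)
The paper does not prove this lemma at all: it is quoted verbatim from Ref.~\cite{TNS}, so the ``paper's proof'' is just that citation. Your sketch is the standard soliton-sorting argument that the cited reference carries out (largest solitons overtake smaller ones, lengths are conserved, collisions cease after finite time), and you explicitly defer the precise two-soliton scattering analysis to \cite{TNS} as well, so your approach is essentially the same as the paper's.
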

Lemma 2.1 is often called the \textit{sorting property}.  We also need the following lemma, which follows from the Lemma 2.1.
\begin{lemm}
  There exists a positive integer $T$ such that for all $t > T$,
  the dependent variables $E_0^{(t)}, E_1^{(t)}, \cdots, E_{N-2}^{(t)}$ satisfy
  \begin{align}
    Q_i^{(t)} \leq E_i^{(t)}, \quad i = 0, 1, ..., N-2.
  \end{align}
\end{lemm}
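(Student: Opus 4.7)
The plan is to prove the stronger statement by induction on $n$: for each $n \in \{0, 1, \ldots, N-2\}$, there exists $T^{(n)}$ such that for all $t > T^{(n)}$, the bounds $Q_j^{(t)} \leq E_j^{(t)}$ hold simultaneously for every $j \in \{0, 1, \ldots, n\}$. Taking $T = T^{(N-2)}$ then yields the lemma. Lemma 2.1 supplies a starting time $T_1$ after which the sorting property $Q_0^{(t)} \leq \cdots \leq Q_{N-1}^{(t)}$ is available, and this will be used freely throughout.

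For the base case $n = 0$, I would pick any $t > T_1$ and compute directly from \eqref{udtoda}. Since $Q_0^{(t+1)} = \min(E_0^{(t)}, Q_0^{(t)})$, both $Q_0^{(t+1)} \leq E_0^{(t)}$ and $Q_0^{(t+1)} \leq Q_0^{(t)} \leq Q_1^{(t)}$ hold, the latter by the sorting property. The update $E_0^{(t+1)} = E_0^{(t)} + Q_1^{(t)} - Q_0^{(t+1)}$ can be rewritten as $E_0^{(t+1)} - Q_0^{(t+1)} = (E_0^{(t)} - Q_0^{(t+1)}) + (Q_1^{(t)} - Q_0^{(t+1)})$, in which both summands are nonnegative, so $E_0^{(t+1)} \geq Q_0^{(t+1)}$, and $T^{(0)} = T_1 + 1$ works.

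For the inductive step, assume the claim at level $n-1$ with witness $T^{(n-1)}$. For $t > T^{(n-1)}$, the inductive hypothesis $E_j^{(t)} \geq Q_j^{(t)}$ for $j \leq n-1$ lets me show by a secondary induction on $j$ that $Q_j^{(t+1)} = Q_j^{(t)}$ for $0 \leq j \leq n-1$: the sum $\sum_{k=0}^{j} Q_k^{(t)} - \sum_{k=0}^{j-1} Q_k^{(t+1)}$ telescopes to $Q_j^{(t)}$ once the equalities $Q_k^{(t+1)} = Q_k^{(t)}$ are in place for $k < j$, and then $Q_j^{(t+1)} = \min(E_j^{(t)}, Q_j^{(t)}) = Q_j^{(t)}$ by the hypothesis. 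Consequently $Q_n^{(t+1)} = \min(E_n^{(t)}, Q_n^{(t)})$, and repeating the base-case calculation with $n$ in place of $0$, using sorting at time $t$ to give $Q_n^{(t+1)} \leq Q_n^{(t)} \leq Q_{n+1}^{(t)}$, yields $E_n^{(t+1)} \geq Q_n^{(t+1)}$. Setting $T^{(n)} = T^{(n-1)} + 1$ closes the induction.

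The only delicate point I anticipate is the telescoping in the inductive step: it is essential that the hypothesis force $Q_j^{(t+1)} = Q_j^{(t)}$ \emph{exactly} (not merely $\leq Q_j^{(t)}$) for $j \leq n-1$, because the clean identity $Q_n^{(t+1)} = \min(E_n^{(t)}, Q_n^{(t)})$ relies on the preceding partial sums collapsing completely. Once this is verified, the comparison $E_n^{(t+1)} - Q_n^{(t+1)} \geq 0$ is immediate from the sorting property and the defining minimum in \eqref{udtoda}.
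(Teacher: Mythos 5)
Your proof is correct, and it actually supplies an argument that the paper omits entirely: the paper merely asserts that Lemma 2.2 ``follows from Lemma 2.1'' and gives no proof. Your two-level induction is a clean way to fill that gap. The key mechanism you isolate --- that once $E_j^{(t)} \geq Q_j^{(t)}$ holds for $j \leq n-1$, the partial sums in \eqref{udtoda} telescope exactly so that $Q_j^{(t+1)} = Q_j^{(t)}$ for $j \leq n-1$ and hence $Q_n^{(t+1)} = \min\bigl(E_n^{(t)}, Q_n^{(t)}\bigr)$ --- is precisely what is needed, and you are right that exact equality (not just $\leq$) is essential there, since otherwise the second argument of the min at level $n$ could exceed $Q_n^{(t)}$ and the comparison $Q_n^{(t+1)} \leq Q_{n+1}^{(t)}$ via sorting would break. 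The remaining inequality $E_n^{(t+1)} - Q_n^{(t+1)} = \bigl(E_n^{(t)} - Q_n^{(t+1)}\bigr) + \bigl(Q_{n+1}^{(t)} - Q_n^{(t+1)}\bigr) \geq 0$ then follows from the defining minimum and the sorting property at time $t$, exactly as you say. The bookkeeping of the thresholds $T^{(n)} = T^{(n-1)} + 1$ with $T^{(0)} = T_1 + 1$ is consistent and keeps the sorting property available at every step used. In the BBS picture this is the statement that, after all collisions have resolved, each soliton is at least as far from its successor as its own length, which is the intuition the paper presumably had in mind; your argument makes it rigorous directly from the equations.
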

\section{Ultradiscrete Toda lattice and invariant factors}
In this section, we give the main theorem of this paper and present an algorithm for computing the Smith normal form of a bidiagonal matrix.

Let $R$ be a principal ideal domain. Two elements $a, b \in R$ are said to be \textit{associate} if $a \mid b$ and $b \mid a$.
If $a \mid b$, then we define $b/a$ to be the element $c \in R$ such that $b = ac$.
The set of all units of $R$ is denoted by $R^{\ast}$.
Instead of using the ud-Toda lattice \eqref{udtoda} directly,
we replace $(+, -, \min)$ in \eqref{udtoda} with $(\times, /, \gcd)$, where gcd denotes the greatest common divisor.
That is:
\begin{align}
  \left\{
  \begin{array}{l}
  q_n^{(t+1)} = \gcd\left(e_{n}^{(t)},~\prod_{j=0}^{n} q_j^{(t)}/\prod_{j=0}^{n-1}q_j^{(t+1)}\right) \\
  e_{n}^{(t+1)} = e_{n}^{(t)}q_{n+1}^{(t)}/q_{n}^{(t+1)} \label{gcdtoda} \\
  e_{-1}^{(t)} = e_{N-1}^{(t)} = 0
  \end{array},
  \right.
\end{align}
where $e_n^{(t)}, q_n^{(t)} \in R$.
We call the system \eqref{gcdtoda} the \textit{gcd-Toda lattice}.
System \eqref{gcdtoda} is considered as an extended expression of the ud-Toda lattice \eqref{udtoda}.
When the dependent variables in \eqref{gcdtoda} have only one irreducible factor $p \in R$, i.e.,  $q_n^{(t)} = p^{Q_n^{(t)}}, e_n^{(t)} = p^{E_n^{(t)}}$ for a single irreducible element $p \in R$ and $Q_n^{(t)}, E_n^{(t)} \in \mathbb{Z}_{\geq 0}$,
then the exponents $Q_n^{(t)}, E_n^{(t)}$ satisfy the ud-Toda lattice \eqref{udtoda}, since $\gcd(q^a, q^b) = q^{\min(a, b)}$ for $a, b \in \mathbb{Z}_{\geq 0}$.
Thus, when the dependent variables have more than one irreducible factors, the equations \eqref{gcdtoda} is equivalent to running the ud-Toda lattice simultaneously on each irreducible factors without performing prime factorization.
This also proves that the divisions in \eqref{gcdtoda} can always be performed.
The above observation is important for connecting ultradiscrete systems and computation of invariant factors.
Let $X^{(0)} \in M(n, R)$ be a lower bidiagonal matrix, and denote elements of $X^{(0)}$ as
\begin{align}
     X^{(0)} = \left(\begin{array}{ccccc}
     q_0^{(0)}&&&& \\
     e_0^{(0)}&q_1^{(0)}&&& \\
     &\ddots&\ddots&& \\
     &&e_{N-3}^{(0)}&q_{N-2}^{(0)}& \\
     &&&e_{N-2}^{(0)}&q_{N-1}^{(0)}
   \end{array}\right). \label{mat}
  \end{align}
Suppose $q_0^{(0)}, q_1^{(0)}, \cdots, q_{N-1}^{(0)}$ and $e_0^{(0)}, e_1^{(0)}, \cdots, e_{N-2}^{(0)}$ are nonzero.
We compute $q_n^{(t)}, e_n^{(t)}$ for $t = 1, 2, ...$ by \eqref{gcdtoda}.
Then we obtain
\begin{align}
  X^{(t)} = \left(\begin{array}{ccccc}
  q_0^{(t)}&&&& \\
  e_0^{(t)}&q_1^{(t)}&&& \\
  &\ddots&\ddots&& \\
  &&e_{N-3}^{(t)}&q_{N-2}^{(t)}& \\
  &&&e_{N-2}^{(t)}&q_{N-1}^{(t)}
\end{array}\right).
\end{align}
The following theorem is the main result of this paper.
\begin{theo}
  For sufficiently large $t > 0$, the diagonal part of the matrix $X^{(t)}$ coincides with
  the Smith normal form of the initial matrix $X^{(0)}$. In other words, the dependent variables $q_0^{(t)}, q_1^{(t)}, ..., q_{N-1}^{(t)}$ of the gcd-Toda lattice \eqref{gcdtoda}
  converge to the invariant factors of $X^{(0)}$ in a finite time.
\end{theo}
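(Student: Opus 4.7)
The plan is to reduce the gcd-Toda lattice \eqref{gcdtoda} to several independent copies of the ud-Toda lattice \eqref{udtoda} by prime factorization, use the sorting lemmas of Section 2 to obtain divisibility among the $q_i^{(t)}$ for large $t$, and then identify the diagonal of $X^{(t)}$ with the invariant factors of $X^{(0)}$ through the $k$-th determinantal divisors. For the first step, for each irreducible $p \in R$ appearing in the initial entries, the $p$-adic valuations $\mathrm{ord}_p(q_n^{(t)})$ and $\mathrm{ord}_p(e_n^{(t)})$ evolve under \eqref{udtoda}, as remarked right after \eqref{gcdtoda}. Since only finitely many irreducibles ever occur in the entries of $X^{(t)}$, applying Lemma 2.1 and Lemma 2.2 to each one and taking the maximum of thresholds yields a uniform $T$ such that for all $t > T$,
\begin{align*}
  q_0^{(t)} \mid q_1^{(t)} \mid \cdots \mid q_{N-1}^{(t)}, \qquad q_i^{(t)} \mid e_i^{(t)} \text{ for } 0 \leq i \leq N - 2.
\end{align*}

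Second, I would introduce the gcd-analogue of the conserved quantities $uC_k$ of Proposition 2.2,
\begin{align*}
  C_k^{(t)} := \gcd\bigl\{w_{j_1}^{(t)} w_{j_2}^{(t)} \cdots w_{j_k}^{(t)} \mid 1 \leq j_1 < j_2 - 1 < \cdots < j_k - k + 1 \leq 2N-1\bigr\},
\end{align*}
where $w_{2i+1}^{(t)} = q_i^{(t)}$ and $w_{2i}^{(t)} = e_i^{(t)}$. The same prime-by-prime reduction shows that $C_k^{(t)}$ is associate to $C_k^{(0)}$ for every $t$. I would then verify the combinatorial identity that $C_k^{(t)}$ is associate to the $k$-th determinantal divisor $d_k(X^{(t)})$, by showing that every nonvanishing $k \times k$ minor of a lower bidiagonal matrix is (up to sign) a single product $\prod_l w_{j_l}$ over a separated tuple, and that every separated tuple is realized by some nonzero minor. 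This rests on the fact that the Leibniz expansion of such a minor admits at most one nonvanishing permutation, obtained by a forced matching of the chosen rows to diagonal or subdiagonal columns.

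Finally, under the divisibility conditions from the first step, every admissible tuple $(j_1, \ldots, j_k)$ satisfies $j_l \geq 2l - 1$, so $w_{j_l}^{(t)}$ is divisible by $q_{\lfloor j_l / 2 \rfloor}^{(t)}$, and the divisibility chain then yields $q_{l-1}^{(t)} \mid w_{j_l}^{(t)}$. Hence $q_0^{(t)} \cdots q_{k-1}^{(t)}$ divides every generator of $C_k^{(t)}$, and equality is attained at the tuple $(1, 3, \ldots, 2k-1)$; thus $d_k(X^{(t)})$ is associate to $q_0^{(t)} \cdots q_{k-1}^{(t)}$. Combined with the invariance of $C_k$, this gives that $d_k(X^{(0)})$ is associate to $q_0^{(t)} \cdots q_{k-1}^{(t)}$, so the $k$-th invariant factor of $X^{(0)}$ is associate to $q_{k-1}^{(t)}$ and the diagonal of $X^{(t)}$ is the Smith normal form.

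The main technical obstacle will be the combinatorial identification of $C_k^{(t)}$ with $d_k(X^{(t)})$: ruling out cancellations in the Leibniz expansion that could create nonmonomial minors, and certifying that each separated tuple produces a genuine nonzero minor. The remaining ingredients (prime-by-prime reduction, application of Lemmas 2.1 and 2.2, and conservation of $C_k$) follow essentially directly from the material already assembled in the excerpt.
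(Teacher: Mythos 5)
Your proposal is correct and follows essentially the same route as the paper: prime-by-prime reduction of the gcd-Toda lattice to independent ud-Toda lattices, conservation of the gcd-analogues $C_k$ identified with the determinantal divisors via the separated-tuple formula (the paper's Lemma 3.1, which it dismisses as ``direct calculation'' but which you rightly flag as the main technical content), and the sorting Lemmas 2.1--2.2 to obtain the divisibility chain for large $t$. Your final step is in fact slightly more explicit than the paper's, which merely asserts that the divisibility conditions let elementary row operations bring $X^{(t)}$ to Smith form, whereas you compute $d_k(X^{(t)})$ directly as $q_0^{(t)}\cdots q_{k-1}^{(t)}$.
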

Before giving a proof of Theorem 3.1, we introduce \textit{determinantal divisors}.
The $i$-th determinantal divisor $d_i(A)$ of a matrix $A$ is the gcd of all $i \times i$ minors of $A$.
The $i$-th invariant factor $s_i(A)$ of $A$ is expressed as $s_i(A) = d_i(A)/d_{i-1}(A)$, where $d_0(A) = 1$.
The determinantal divisors of the bidiagonal matrix $X^{(t)}$ are expressed in a simpler form by means of the elements of the matrix $X^{(t)}$.
\begin{lemm}
   Define the variables
   \begin{align}
     &w_{2i+1}^{(t)} = q_i^{(t)}, \quad i = 0, ..., N-1, \\
     &w_{2i}^{(t)} = e_i^{(t)}, \quad i = 0, ..., N-2.
   \end{align}
  Then, the determinantal divisors of the matrix $X^{(t)}$ are given by
  \begin{align}
    d_0^{(t)} &= \gcd_{1 \leq j_1 \leq 2N-1} w_{j_1}^{(t)}, \\
    d_1^{(t)} &= \gcd_{1 \leq j_1 < j_2 - 1 \leq 2N -1} w_{j_1}^{(t)} w_{j_2}^{(t)}, \\
    &\vdots \\
    d_{l-1}^{(t)} &= \gcd_{1 \leq j_1 < j_2 - 1 < \cdots < j_l - l + 1  \leq 2N -1} w_{j_1}^{(t)} w_{j_2}^{(t)} \cdots w_{j_l}^{(t)}, \\
    &\vdots \\
    d_{N-1}^{(t)} &= \gcd_{1 \leq j_1 < j_2 - 1 < \cdots < j_N - N + 1  \leq 2N -1} w_{j_1}^{(t)} w_{j_2}^{(t)} \cdots w_{j_N}^{(t)},
  \end{align}
  where $\gcd_{1\leq i \leq n} a_i$ denotes the gcd of all $a_1, a_2, ..., a_n$.
\end{lemm}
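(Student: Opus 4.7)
The plan is to enumerate the nonzero $l \times l$ minors of the lower bidiagonal matrix $X^{(t)}$ explicitly, exploiting the fact that such a sparse matrix forces each nonzero minor to collapse to a single monomial in its entries. Once this is in hand, the lemma follows by recognising the monomials on the right-hand side as exactly the products of $l$ entries of $X^{(t)}$ sitting in $l$ distinct rows and $l$ distinct columns.

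\textbf{Step 1 (locating entries).} I would first record where each $w_j^{(t)}$ sits: $w_{2i+1}^{(t)} = q_i^{(t)}$ at position $(i+1,i+1)$ and $w_{2i}^{(t)} = e_{i-1}^{(t)}$ at position $(i+1,i)$. A direct case check then shows that $w_j^{(t)}$ and $w_{j'}^{(t)}$ share a common row or column exactly when $|j - j'| = 1$, and occupy pairwise disjoint rows and columns as soon as $|j - j'| \geq 2$. Consequently the index condition $1 \leq j_1 < j_2 - 1 < \cdots < j_l - l + 1 \leq 2N - 1$ is equivalent to the geometric condition that $w_{j_1}^{(t)}, \ldots, w_{j_l}^{(t)}$ lie in $l$ distinct rows and $l$ distinct columns of $X^{(t)}$.

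\textbf{Step 2 (each nonzero minor is a monomial).} Let $M$ be the $l \times l$ submatrix of $X^{(t)}$ specified by rows $r_1 < \cdots < r_l$ and columns $c_1 < \cdots < c_l$. A nonzero term in the Leibniz expansion of $\det M$ requires a permutation $\sigma$ with $c_{\sigma(a)} \in \{r_a - 1, r_a\}$ for every $a$, since these are the only columns in which row $r_a$ of $X^{(t)}$ has a nonzero entry. I would then prove that at most one such $\sigma$ can exist: if $\sigma \neq \tau$ were both admissible, the symmetric difference of the two matchings contains a cycle, and following the cycle would force every row $r_a$ it visits to be matched (once by each of $\sigma, \tau$) to both of its admissible columns $r_a - 1$ and $r_a$, and every column $c$ it visits to both of its admissible rows $c$ and $c+1$. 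Chaining these inclusions produces a strictly decreasing sequence $\cdots < r - 2 < r - 1 < r$ of integer positions lying inside the finite set $\{r_1, \ldots, r_l\}$, a contradiction. Hence $\det M$ is either $0$ or a single signed monomial $\pm w_{j_1}^{(t)} \cdots w_{j_l}^{(t)}$ in the entries.

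\textbf{Step 3 (taking the gcd).} Combining the previous steps, the nonzero $l \times l$ minors of $X^{(t)}$ are, up to units of $R$, exactly the monomials $w_{j_1}^{(t)} \cdots w_{j_l}^{(t)}$ with $1 \leq j_1 < j_2 - 1 < \cdots < j_l - l + 1 \leq 2N - 1$, and conversely every such monomial is realised as some $l \times l$ minor. Since gcds in the PID $R$ are insensitive to unit multiples, taking the gcd of all these monomials yields the claimed formula for $d_{l-1}^{(t)}$. The substantive ingredient is the uniqueness of the admissible permutation in Step 2, which is where the bidiagonality of $X^{(t)}$ is used essentially; the rest of the argument is a routine translation between index tuples and row/column selections.
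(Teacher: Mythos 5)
Your proof is correct, and it is precisely the ``direct calculation'' that the paper declines to write out: locate the entries $w_j^{(t)}$, show that bidiagonality forces every nonzero $l\times l$ minor to collapse to a single signed monomial (your alternating-cycle argument for uniqueness of the admissible permutation is sound), and identify the surviving monomials with the index condition $j_{a+1}\geq j_a+2$. Note that you have, correctly, read the interleaving as $w_1=q_0^{(t)},\,w_2=e_0^{(t)},\,\ldots,\,w_{2N-1}=q_{N-1}^{(t)}$, i.e.\ $w_{2i}^{(t)}=e_{i-1}^{(t)}$; the paper's literal $w_{2i}^{(t)}=e_i^{(t)}$ is an indexing slip, since it would leave $w_{2N-2}^{(t)}$ undefined while it appears in the stated gcd range.
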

The above lemma can be proved easily through direct calculation.
We now return to the proof of Theorem 3.1.
\begin{proof}[Proof of Theorem 3.1]
  First, we show that the invariant factors of $X^{(t)}$ do not depend on the variable $t$.
  Let $p_1, p_2, ..., p_m$ be all irreducible elements that appear in the irreducible decomposition of
  $q_0^{(0)}, ..., q_{N-1}^{(0)}$ and $e_0^{(0)}, ..., e_{N-2}^{(0)}$, and suppose that none of them are associate to any of others. Then, no irreducible elements other than $p_1, p_2, ..., p_m$
  appear in the decomposition of $q_i^{(t)}$ and $e_i^{(t)}$ for $t \geq 1$, because system \eqref{gcdtoda} contains only multiplications, divisions, and $\gcd$ operations.
  The dependent variables $q_0^{(t)}, ..., e_{N-2}^{(t)}$ are expressed as
  \begin{align}
  &q_0^{(t)} = u_{0}^{(t)} p_0^{Q_{0,0}^{(t)}}p_1^{Q_{1,0}^{(t)}}\cdots p_m^{Q_{m,0}^{(t)}}, \quad  \cdots, \quad q_{N-1}^{(t)} = u_{N-1}^{(t)}p_0^{Q_{0,N-1}^{(t)}}p_1^{Q_{1,N-1}^{(t)}}\cdots p_m^{Q_{m,N-1}^{(t)}}, \\
  &e_{0}^{(t)} = v_{0}^{(t)}p_0^{E_{0,0}^{(t)}}p_1^{E_{1,0}^{(t)}}\cdots p_m^{E_{m,0}^{(t)}}, \quad \cdots, \quad e_{N-2}^{(k)} = v_{N-2}^{(t)}p_0^{E_{0,N-2}^{(t)}}p_1^{E_{1,N-2}^{(t)}}\cdots p_m^{E_{m,N-2}^{(t)}},
\end{align}
  where $Q_{i,j}^{(t)}, ~E_{i,j}^{(t)} \in \mathbf{Z}_{\geq 0}$, $u_{i}^{(t)}, v_{i}^{(t)} \in R^{\ast}$.
Because $q_0^{(t)}, ..., e_{N-2}^{(t)}$ satisfy the gcd-Toda lattice \eqref{gcdtoda}, exponents  $Q_{i, 0}^{(k)}, ... Q_{i, N-1}^{(k)}, E_{i, 0}^{(k)}, ..., E_{i, N-2}^{(k)}$ of a single irreducible factor $q_i$
   satisfy the ud-Toda lattice \eqref{udtoda}.
   By Proposition 2.2, we have conserved quantities of the ud-Toda lattice $uC_{i, 1}, ..., uC_{i, N}$ for each $i = 1, ..., m$.
   Therefore, we obtain conserved quantities $C_1, ..., C_N$ of the system \eqref{gcdtoda} :
   \begin{align}
    C_k = p_{0}^{uC_{0, k}}  p_{1}^{uC_{1, k}} \cdots  p_{m}^{uC_{m, k}}, \quad k = 1, 2, ..., N,
   \end{align}
   where $u_k \in R^{\ast}$.
   By Lemma 3.1, we see that $C_k$ and $d_k^{(t)}$ differ by a multiplicative factor of a unit; thus, invariant factors do not depend on the variable $t$.
   Next, we prove that the dependent variables $q_0^{(t)}, ..., q_{N-1}^{(t)}$ converge to the invariant factors.
   By Lemmas 2.1 and 2.2, there exists positive integer $T$ such that for all $t > T$, we have
   \begin{align}
     &Q_{i,0}^{t} \leq Q_{i,1}^{(t)} \leq \cdots \leq Q_{i,N-1}^{(t)}
   \end{align}
   for $i = 0, ..., m$ and
   \begin{align}
     Q_{i, j}^{(t)} \leq E_{i, j}^{(t)}
   \end{align}
   for $i = 0, ..., m$ and $j = 0, ..., N-2$. This means that
   $q_i^{(t)} \mid q_{i+1}^{(t)}$ and $q_i^{(t)} \mid e_{i}^{(t)}$ for all $i = 0, ..., N-2$ when $t > T$.
   Therefore, $X^{(t)}$ can be transformed into the Smith normal form by elementary row operations.
   This concludes the proof. \qedhere
\end{proof}
Based on Theorem 3.1, we present a new method for computing invariant factors of matrices over a principal ideal domain.
Let $A = (a_{ij})_{i,j=1}^n$ be an non-zero $n \times n$ matrix over a principal ideal domain $R$.
 We can transform $A$ by unimodular transformations to a bidiagonal matrix $B = (b_{ij})_{i,j = 1}^n$ with
$b_{11}, b_{22}, ..., b_{kk} \neq 0, b_{21}, b_{32}, ..., b_{k,k-1} \neq 0$ for some $k$ and $b_{jj} = b_{j+1,j} = 0$ for $j > k$.
The element $b_{k+1,k}$ may or may not be zero.
Suppose that the elements in the first row of matrix $A$ are not all zero.
Then we can set $a_{11} \neq 0$ by exchanging columns if necessary. If the elements in the first row of matrix $A$ are all zero,
then add a non-zero row to the first row of matrix $A$ and exchange the columns so that $a_{11} \neq 0$.

Let $d = \gcd(a_{11}, a_{12})$. Then there exist $p, q, s, t \in R$ such that
\begin{align}
  &a_{11}p + a_{12}q = d, \\
  &a_{11} = sd, \quad a_{12} = -td.
\end{align}
Let the matrix $G(1, 2)$ to be
\begin{align}
  G(1, 2) =
  \left(\begin{array}{ccccc}
  p&t&&&\\
  q&s&&&\\
  &&1&&\\
  &&&\ddots&\\
  &&&&1
  \end{array}\right).
\end{align}
Then $\det G(1, 2) = 1$ and the ($1, 2)$-entry of the matrix $AG(1, 2)$ is zero.
Continuing this procedure, we can transform $A$ by unimodular transformations to the matrix $A' = (a'_{ij})_{i, j=1}^n$ with $a'_{11} \neq 0$ and $a'_{12} = a'_{13} = \cdots = a'_{1n} = 0$.
If the elements in the second row or below of the matrix $A'$ are not all zero,
we can set $a'_{21} \neq 0$ by elementary operations on $A'$ as before with the first row of $A$ unchanged.
By applying the above procedure to the rows, we can transform $A'$ by unimodular transformations to the matrix $A''$ with $a''_{31} = a''_{41} = \cdots = a''_{n1} = 0$ and $a''_{12} = a''_{13} = \cdots = a''_{1n} = 0$.
Denote $A''$ again as $A$. If the submatrix $\widetilde{A} = (a_{ij})_{i,j = 2}^n$ of $A$ is non-zero, we apply the above transformations inductively to $A$
to obtain the bidiagonal matrix $B$ satisfying the conditions stated in the beginning.
Once the bidiagonalization is done, we can use the following algorithm by setting
\begin{align}
  X^{(0)} = \left(\begin{array}{ccccc}
  b_{11}&&&& \\
  b_{21}&b_{22}&&& \\
  &b_{32}&\ddots&& \\
  &&&b_{k-1,k-1}& \\
  &&&b_{k,k-1}&b_{kk}
\end{array}\right)
\end{align}
if $b_{k+1,k} = 0$ or
\begin{align}
  X^{(0)} = \left(\begin{array}{ccccc}
  b_{11}&&&& \\
  b_{21}&b_{22}&&& \\
  &b_{32}&\ddots&& \\
  &&&b_{k,k}& \\
  &&&b_{k+1,k}&0
\end{array}\right)
\end{align}
if $b_{k+1,k} \neq 0$. Note that the zero element at the bottom right of the latter matrix does not affect the correctness of the algorithm.
\begin{alg}
  ~
  \begin{enumerate}
    \item[(i)] For a given lower bidiagonal matrix $X^{(0)}$, set the initial values of dependent variables of \eqref{gcdtoda} as \eqref{mat}. Set $t = 0$.
    \item[(ii)] Calculate $X^{(t+1)}$ using \eqref{gcdtoda}.
    \item[(iii)] If terminating conditions $q_i^{(t+1)} \mid q_{i+1}^{(t+1)}$ and $q_i^{(t+1)} \mid e_i^{(t+1)}$ hold for all $i = 0, 1, ..., N-2$, then go to (iv),
    otherwise set $t := t+1$ and go to (ii).
    \item[(iv)] Output $q_0^{(t+1)}, q_1^{(t+1)}, ..., q_{k-1}^{(t+1)}$.
  \end{enumerate}
\end{alg}
\begin{ex}
  Let the initial matrix $X^{(0)}$ be
  \begin{align}
    X^{(0)} = \left(\begin{array}{ccc}
    2 &  &  \\
    4 & 6 &  \\
     & 3 & 9
  \end{array}\right).
  \end{align}
  Then, Algorithm 3.1 proceeds as
    \begin{eqnarray*}
      \hspace{-30pt} X^{(0)} &= \left(\begin{array}{ccc}
      2&& \\
      4&6& \\
      &3&9
    \end{array}\right),
    \quad &X^{(1)} = \left(\begin{array}{ccc}
        2&& \\
        12&3& \\
        &9&18
        \end{array}\right), \quad
      X^{(2)} = \left(\begin{array}{ccc}
      2&& \\
      18&3& \\
      &54&18
    \end{array}\right), \\
  \hspace{-30pt}   X^{(3)} &= \left(\begin{array}{ccc}
    2&& \\
    27&3& \\
    &324&18
    \end{array}\right),
    &X^{(4)} = \left(\begin{array}{ccc}
      1&& \\
      81&6& \\
      &972&18
    \end{array}\right).
  \end{eqnarray*}
   We see that the matrix $X^{(4)}$ satisfies the terminating conditions of Algorithm 3.1. Hence, the Smith normal form of the matrix $X^{(0)}$ is
   \begin{align}
     \left(\begin{array}{ccc}
      1&& \\
      &6& \\
      &&18
    \end{array}\right).
  \end{align}
\end{ex}
\section{Concluding remarks}
In this paper, we introduced the gcd-Toda lattice and showed
that its dependent variables converge to the invariant factors of a certain bidiagonal matrix over a principal ideal domain.
Based on Theorem 3.1, we presented a new method for computing invariant factors of a given matrix.
This is the first instance of the usage of ultradiscrete integrable systems in the computation of invariant factors.
However, several problems are left for future works.

First, there remains the tasks of the estimation of computational cost and comparison with existing algorithms.

Second, the extension of this algorithm for another ultradiscrete integrable system is an interesting problem.
We can introduce similar algorithms for invariant factors of a certain tridiagonal matrix, which is related to the ultradiscretization of the \textit{elementary Toda orbits} introduced by Faybusovich and Gekhtman \cite{FG}. This result will be discussed in a subsequent paper.

Third, in the case of the discrete integrable system, the conservation of eigenvalues is the direct consequence of the Lax formalism of the system.
Therefore, it is natural to ask whether there is any formulation of the ultradiscrete systems that ensures the conservation of invariant factors of a matrix.
\section*{Acknowledgements}
The research of KK was partially supported by Grant-in-Aid for JSPS  Fellows, 19J23445. The research of ST was partially supported by JSPS Grant-in-Aid for Scientific Research (B), 19H01792. This research was partially supported by the joint project ``Advanced Mathematical Science for Mobility Society'' of 
Kyoto University and Toyota Motor Corporation.

\section*{Data Availability Statement}
Data sharing is not applicable to this article as no new data were created or analyzed in this study.

%
%

%


\bibliography{paper1.bib}

\end{document}